\newtheorem{theorem}{Theorem}
\newtheorem{lemma}{Lemma}
\newtheorem{question}{Question}
\theoremstyle{definition}
\newtheorem{definition}{Definition}
\title{Matching Markets with Chores}
\author{Jugal Garg\thanks{Supported by NSF grants CCF-1942321 and CCF-2334461.} \and Thorben Tr\"obst\thanks{Supported by NSF grant CCF-2230414.} \and Vijay V.\ Vazirani\footnotemark[2]}
\begin{document}

\maketitle

\begin{abstract}
The fair division of chores, as well as mixed manna (goods and chores), has received substantial recent attention in the fair division literature; however, ours is the first paper to extend this research to matching markets. Indeed, our contention is that matching markets are a natural setting for this purpose, since the manna that fit into the limited number of
hours available in a day can be viewed as one unit of allocation. We extend several well-known results that hold for goods  to the settings of chores and mixed manna. 
In addition, we show that the natural notion of an earnings-based equilibrium, which is more
natural in the case of all chores, is equivalent to the pricing-based equilibrium
given by Hylland and Zeckhauser for the case of goods. 
\end{abstract}

\section{Introduction}
In a \emph{one-sided matching market}, we are given a set $A$ of \emph{agents} and a set $G$ of
\emph{items} (traditionally goods).
Each agent has preferences over the items.
We assume that $|A| = |G| = n$ and the goal is then to find a perfect matching between items and
agents which satisfies certain desirable properties including fairness and efficiency.

It is important to note that we do not allow monetary transfers.
Markets of this kind arise in various situations in which we want to fairly allocate items/entities among
people but in which payments would be considered immoral or impractical.
For example, consider assigning students to schools (or to individual courses), assigning organ
donations to recipients, or doctors to hospitals.

One can distinguish these markets based on whether the preferences of the agents are ordinal, i.e.\
each agent submits a total order over the items, or cardinal, i.e.\ each agent $i$ submits numerical utilities
$(u_{i j})_{j \in G}$. Whereas ordinal preferences are easier to elicit, cardinal preferences are more expressive, thereby producing higher quality allocations and leading to significant gain in efficiency, e.g., \cite{ILWM17}  give a striking example with $n$ agents and goods, in which an allocation under cardinal utilities improves every agent by a factor of $\Theta(\log n)$ over the (coarse) allocation made under ordinal information.  
For this reason, we are interested in mechanisms for matching markets which take advantage of
cardinal utilities.

Integral allocations in matching markets cannot achieve any
reasonable measure of fairness, e.g., consider a scenario in which all agents like only one item. For this reason it is customary to allow lotteries over matchings, i.e.\ fractional perfect
matchings, instead. In this setting, the classic mechanism is due to Hylland and Zeckhauser (HZ) \cite{HZ79} based on competitive equilibrium; we will cover this mechanism and related definitions in Section~\ref{sec:eq_notions}.
It finds allocations (lotteries) which are Pareto-optimal (PO) and envy-free (EF). Moreover, it is incentive compatible in the large \cite{HMPY18}.
Unfortunately, the problem of approximating the HZ equilibrium is PPAD-complete \cite{VY21,CCPY22}, 
and in fact so is the more
general problem of finding any EF+PO allocation \cite{CGMM22,TV24}.

Several extensions of the HZ mechanism can be found in the literature, for example for markets with
endowments \cite{EMZ21,GTV24} or in two-sided markets \cite{M16}.
However, the computational difficulties remain.
To counter this, \cite{HV22} proposed Nash-bargaining-based matching market models. 
This leads to polynomial time mechanisms \cite{HV22,PTV21} which are PO and
\emph{approximately} EF \cite{TV24}.

In all of these cases, all utilities are assumed to be \emph{non-negative}, i.e.\ 
$G$ consists exclusively of \emph{goods}. While considering chores as well, 
three types of settings need to be studied:  
\begin{enumerate}
    \item In the \emph{goods} setting, we have $u_{i j} \geq 0$ for all $i, j$.
    \item In the \emph{chores} setting, we have $u_{i j} \leq 0$ for all $i, j$.
    \item The \emph{mixed} setting, also called \emph{mixed manna}. 
\end{enumerate}

For a motivating consider the division
of various tasks or activities among people.
An activity might be enjoyable ($u_{ij} > 0$) or it might be a chore ($u_{ij} < 0$).
Indeed, it may even be enjoyable for some $i$ and a chore for another $i'$.
The matching constraints on the agents enforce the fact that each agent has only a limited number of
hours available in a day to do both enjoyable and displeasing activities.

The fair division of chores as well as mixed manna has received substantial recent attention in the fair division literature; see \cite{GLD23,amanatidis2023fair,liu2024mixed} for recent surveys. Ours is first paper to extend this research to the area of matching markets. Indeed, our contention is that matching markets are a natural setting for this purpose, since the manna that fit into the limited number of
hours available in a day can be viewed as one unit of allocation. 
Our contributions are the following:
\begin{enumerate}
    \item We make the observation that, due to the matching constraints, many results can be easily
        translated between the various settings; see Section~\ref{sec:shifting}.
    \item We consider an earnings-based equilibrium analogous to the equilibrium proposed for the
        fair division of chores by Bogomolnaia et al.\ \cite{BMSY17}.
        We show that for matching markets, this equilibrium notion coincides with the HZ
        equilibrium; see Section~\ref{sec:eq_equiv}.
    \item We show that an EF+PO allocation can be found in polynomial time if we restrict ourselves
        to two types of agents, where each agent's utility function is one of two given functions; see Section~\ref{sec:const_types}.
    \item We study potential extensions of the Nash-bargaining-based mechanism for chores.
        We give several counterexamples that natural extensions to chores are highly inefficient;
        see Section~\ref{sec:nash_bargaining}.
\end{enumerate}

Our paper exposes an interesting direction for future research, namely the problem of finding
approximately fair and efficient allocations in the chores setting in polynomial time.

\subsection{Additional Related Work}
Matching markets have been extensively studied with numerous  applications across various multi-agent settings (see e.g.,~\cite{AzizGSW19,AzizCS23,ChoKLLL0YY24,BeynierMRS21,GuptaPSZ19,HosseiniLC18,Aziz20b}).

In the ordinal setting, two popular mechanisms are Probabilistic Serial~\cite{Bogomolnaia-PS} and Random Priority~\cite{Moulin2018fair}. Random Priority is strategyproof but lacks envy-freeness, while Probabilistic Serial gurantees envy-freeness but is not strategyproof. 

Several researchers have proposed HZ-type mechanisms for a variety of applications, including~\cite{Budish2011combinatorial, He2018pseudo, Le2017competitive, Mclennan2018efficient}. The basic framework has also been extended in several directions, such as two-sided matching markets, incorporating quantitative constraints, and adapting to settings where agents have initial bundles of items rather than money; see e.g.,~\cite{EMZ21, Echenique2019fairness}.

Competitive equilibrium with equal incomes (CEEI) is a classic approach for achieving envy-free and Pareto optimal allocations~\cite{VARIAN74}. This method has been extensively studied in the context of both goods and chores, as well as for mixed manna, particularly in settings without matching constraints; see recent papers~\cite{GargM20,CKMN24} and the references therein.

\section{Equilibrium Notions}\label{sec:eq_notions}
A standard approach to finding desirable allocations in the all goods setting is the
celebrated Hylland Zeckhauser (HZ) mechanism \cite{HZ79}.
In this section, we will briefly review the workings of the HZ mechanism.

The core idea is to implement a \emph{pseudo-market}, i.e.\ to introduce some amount of fake money
to create a market with money and then to use a market equilibrium in order to find our desirable
allocations.
In order for this to be fair, in the HZ mechanism, each agent gets exactly one unit of fake money.
The corresponding equilibrium notion is given below.

\begin{definition}
    An \emph{HZ equilibrium} consists of an allocation $x = (x_{ij})_{i \in A, j \in G}$ and prices
    $p = (p_j)_{j\in G}$ (both non-negative) such that:
    \begin{enumerate}
        \item Each agent is fully matched, i.e.\ for each $i\in A$, the sum of allocations across all goods $j\in G$ satisfies $\sum_{j \in G} x_{i j} = 1$.
        \item Each good is fully matched, i.e.\ for each $j\in G$, the sum of allocations across all agents $i\in A$ satisfies $\sum_{i \in A} x_{i j} = 1$.
        \item No agent overspends, i.e.\ $p \cdot x_i = \sum_{j \in G} p_j x_{i j} \leq 1$.
        \item Every agent gets a cheapest optimal bundle, i.e.\ for any $(y_j)_{j \in G}$ with $y
            \geq 0$, $\sum_{j \in G} y_j = 1$, and $p \cdot y \leq 1$, we have that $u_i \cdot y
            \leq u_i \cdot x_i$ and if $u_i \cdot y = u_i \cdot x_i$, then $p \cdot y \geq p \cdot x_i$. We note that the condition that the bundles be \emph{cheapest}
            is necessary for Pareto-optimality.
    \end{enumerate}
\end{definition}

\begin{theorem}[Hylland and Zeckhauser \cite{HZ79}]
    An HZ equilibrium always exists.
\end{theorem}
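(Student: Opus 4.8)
The plan is to prove existence of an HZ equilibrium via a fixed-point argument, which is the standard route for competitive-equilibrium existence theorems. The main idea is to set up a correspondence on the space of price vectors (together with a mild normalization), show it satisfies the hypotheses of Kakutani's fixed-point theorem, and then verify that a fixed point is in fact an HZ equilibrium. Because each agent has a budget of exactly $1$ and buys a probability vector $y$ with $\sum_j y_j = 1$, prices live in a naturally bounded set: we may restrict to $p$ with $0 \le p_j$ and $\sum_j p_j \le n$ (or normalize so $p$ lies in a simplex-like polytope), since no price above the total budget $n$ can ever be relevant at equilibrium. This compactness is what makes Kakutani applicable without the usual noncompactness headaches.

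First I would define, for a fixed price vector $p$, each agent's demand correspondence $D_i(p)$ as the set of bundles $y \ge 0$ with $\sum_j y_j = 1$, $p \cdot y \le 1$, that maximize $u_i \cdot y$ and, among those maximizers, minimize $p \cdot y$ (the cheapest-optimal-bundle refinement in clause~4). I would check that $D_i(p)$ is nonempty (the feasible set is a nonempty compact polytope — it always contains the uniform bundle $(1/n, \ldots, 1/n)$ when prices are normalized so that this is affordable, which the normalization $\sum_j p_j \le n$ guarantees), convex, and upper hemicontinuous in $p$. Upper hemicontinuity of the plain demand maximizer follows from Berge's maximum theorem; the extra cheapest-bundle selection needs a small argument, but it is again an optimization of a continuous linear function over an upper-hemicontinuous compact-valued correspondence, so Berge applies a second time. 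Convexity holds because the set of utility-maximizers is a face of a polytope and the cheapest ones among them form a sub-face.

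Next I would set up the price-adjustment (Walrasian auctioneer) map: given the aggregate demand $z(p) = \sum_i x_i$ (with $x_i \in D_i(p)$), the auctioneer picks new prices $p'$ maximizing $p' \cdot (z(p) - \mathbf{1})$ over the normalized price polytope, i.e.\ raising the price of overdemanded goods. The product correspondence $p \mapsto \{(x_i)_i, p'\}$ is then convex-valued and upper hemicontinuous on a nonempty compact convex domain, so Kakutani yields a fixed point $(x^*, p^*)$. The final and most delicate step is to show a fixed point is an HZ equilibrium: clauses~1, 3, 4 are essentially built in (clause~1 and the budget constraint come from the definition of $D_i$; clause~4 is the definition of $D_i$ together with the fact that the relevant competing bundles $y$ satisfy exactly the same constraints defining $D_i$); the work is clause~2, market clearing $\sum_i x^*_{ij} = 1$ for every $j$. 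The standard Walras'-law argument gives $\sum_i p^* \cdot x^*_i \le n$ and one wants to conclude $z(p^*) = \mathbf{1}$ coordinatewise from the auctioneer's optimality. The subtlety specific to the HZ setting — and the step I expect to be the main obstacle — is that here goods can be "free" or even over-supplied in a way that does not immediately contradict optimality, because agents are forced to consume exactly one unit total regardless of price; one must rule out $\sum_i x^*_{ij} > 1$ using the auctioneer's choice (such a $j$ would have $p^*_j$ pushed to the top of the allowed range, eventually making it unaffordable or suboptimal, a contradiction) and rule out $\sum_i x^*_{ij} < 1$ using the constraint $\sum_j \sum_i x^*_{ij} = n$ together with no strict over-demand. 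Handling the boundary case where some price hits the normalization cap, and ensuring the cheapest-bundle refinement does not break upper hemicontinuity, are the fiddly technical points; everything else is routine.
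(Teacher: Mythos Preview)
The paper does not supply its own proof of this theorem; it simply attributes the result to Hylland and Zeckhauser and remarks in a single sentence that ``the proof of this fact is non-constructive and relies on Kakutani's fixed point theorem.'' Your sketch follows exactly that route, and you have correctly flagged the two genuinely delicate points (upper hemicontinuity of the cheapest-optimal-bundle refinement, and deducing full market clearing at the fixed point) that distinguish the HZ argument from a routine Kakutani application.
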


The proof of this fact is non-constructive and relies on Kakutani's fixed point theorem.
Two properties which we are particularly interested in are \emph{Pareto-optimality} and
\emph{envy-freeness} which are the standard notions of efficiency and fairness respectively.

\begin{definition}
    A fractional perfect matching (FPM) $x$ is \emph{Pareto-better} than another FPM $y$
    (alternatively, \emph{Pareto-dominates} $y$) if $u_i \cdot x_i \geq u_i \cdot y_i$ for all $i
    \in A$ and $u_i \cdot x_i > u_i \cdot y_i$ for some $i \in a$.
    We say that an FPM $x$ is \emph{Pareto-optimal (PO)} if no other FPM is Pareto-better than it.
\end{definition}

\begin{definition}
    In an FPM $x$, we say that $i \in A$ envies $i' \in A$ if $u_i \cdot x_{i'} > u_i \cdot x_i$.
    We say that $x$ is \emph{envy-free (EF)} if no agent envies any other agent.
\end{definition}

\begin{theorem}[Hylland and Zeckhauser \cite{HZ79}]
    Let $(x, p)$ be an HZ equilibrium, then $x$ is envy-free and Pareto-optimal (EF+PO).
\end{theorem}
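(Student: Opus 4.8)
The plan is to prove the two properties separately, in both cases exploiting the fact that conditions~(3) and~(4) say precisely that $x_i$ is a \emph{cheapest utility-maximizing bundle} in agent $i$'s budget set $B_i = \{\, y \in \bbQ^G_{\geq 0} : \sum_{j \in G} y_j = 1,\ p \cdot y \leq 1 \,\}$ (the budget set is common to all agents since every agent has one unit of money).

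\emph{Envy-freeness.} I would argue directly: fix $i, i' \in A$. By condition~(1) the bundle $x_{i'}$ satisfies $\sum_{j \in G} x_{i' j} = 1$, by condition~(3) it satisfies $p \cdot x_{i'} \leq 1$, and it is non-negative, so $x_{i'} \in B_i$. Applying condition~(4) with $y = x_{i'}$ gives $u_i \cdot x_{i'} \leq u_i \cdot x_i$, which is exactly the statement that $i$ does not envy $i'$. Since $i, i'$ were arbitrary, $x$ is EF.

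\emph{Pareto-optimality.} I would argue by contradiction. Suppose an FPM $y$ Pareto-dominates $x$, so $u_i \cdot y_i \geq u_i \cdot x_i$ for all $i$ and $u_{i^*} \cdot y_{i^*} > u_{i^*} \cdot x_{i^*}$ for some $i^*$. The key claim is that $p \cdot y_i \geq p \cdot x_i$ for every $i$, with strict inequality at $i^*$. To see this, fix $i$ and note $y_i \geq 0$ and $\sum_{j \in G} y_{i j} = 1$ because $y$ is an FPM. If $p \cdot y_i > 1$, then since $p \cdot x_i \leq 1$ by condition~(3) we get $p \cdot y_i > p \cdot x_i$, done; moreover this is the only case consistent with $i = i^*$, because if instead $p \cdot y_i \leq 1$ then $y_i \in B_i$, and then condition~(4) forces $u_i \cdot y_i \leq u_i \cdot x_i$, hence $u_i \cdot y_i = u_i \cdot x_i$ by Pareto-domination (ruling out $i = i^*$), and the cheapest-bundle clause of condition~(4) yields $p \cdot y_i \geq p \cdot x_i$. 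This proves the claim. Now summing over all agents and using that every good is fully matched under both $x$ and $y$ (condition~(2)),
\[
\sum_{i \in A} p \cdot y_i \;=\; \sum_{j \in G} p_j \sum_{i \in A} y_{i j} \;=\; \sum_{j \in G} p_j \;=\; \sum_{j \in G} p_j \sum_{i \in A} x_{i j} \;=\; \sum_{i \in A} p \cdot x_i ,
\]
which contradicts the strict inequality at $i^*$. Hence no such $y$ exists and $x$ is PO.

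The feasibility bookkeeping (checking $y_i \in B_i$, the summation identity via full matching of goods) is routine. The one point that needs care, and which I expect to be the only real subtlety, is the case split on whether $y_i$ stays within budget: condition~(4) may only be invoked when $p \cdot y_i \leq 1$, and one must verify that in that subcase the utilities are forced to be equal, so that the improved agent $i^*$ necessarily falls into the overspending subcase and contributes the required strict inequality.
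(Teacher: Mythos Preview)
Your proof is correct. Note, however, that the paper does not supply its own proof of this theorem; it is stated as a result of Hylland and Zeckhauser \cite{HZ79} and left unproved. The paper \emph{does} prove the analogous pair of theorems for HZ \emph{earnings} equilibria, and your argument is exactly the mirror image of those proofs with the price/earnings inequalities reversed: envy-freeness because every other agent's bundle lies in the common budget set, and Pareto-optimality by comparing $\sum_i p\cdot y_i$ with $\sum_i p\cdot x_i$ and observing both equal $\sum_j p_j$ via the full-matching condition on goods.

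If anything, your write-up is slightly more careful than the paper's earnings-side argument: you make explicit the case split on whether $p\cdot y_i \leq 1$, whereas the paper's proof of the earnings analogue asserts ``$p\cdot y_i \leq p\cdot x_i$ as $x_i$ is a highest-earning, optimal bundle'' without separately treating the possibility that $y_i$ lies outside the feasible set. Your observation that the strictly-improving agent $i^*$ must fall into the over-budget case is exactly the point that makes the argument airtight.
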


Neither $x$ nor $p$ in an HZ equilibrium need to be unique.
However, it is possible to restrict the equilibrium prices via the following useful lemma.

\begin{lemma}\label{lem:hz_price_zero}
    Let $(x, p)$ be an HZ equilibrium, then there are always prices $p'$ such that $(x, p')$ is an
    HZ equilibrium and there is at least one $j \in G$ with $p'_j = 0$.
\end{lemma}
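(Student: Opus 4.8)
The plan is to leave the allocation $x$ alone and replace $p$ by a suitably \emph{affinely rescaled} price vector: one that restricts on the simplex $\{y : \sum_j y_j = 1\}$ to the same function as $p$ up to a positive affine reparametrization, so that every budget set and every cost comparison is preserved. Let $c = \min_{j \in G} p_j$. First I would record the bound $c \le 1$: for any agent $i$, since $x_i \ge 0$ and $\sum_j x_{ij} = 1$, we have $c \le \sum_j p_j x_{ij} = p \cdot x_i \le 1$ by condition~3. If $c = 0$ there is nothing to prove, so assume $c > 0$.

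Suppose first that $c < 1$, and set $p'_j := (p_j - c)/(1-c)$ for all $j$. Then $p' \ge 0$ (as $p_j \ge c$ and $1-c > 0$) and $p'_{j^\star} = 0$ for any $j^\star$ attaining the minimum price. The key identity is that for every $y$ with $\sum_j y_j = 1$,
\[
p' \cdot y = \frac{1}{1-c}\left( p \cdot y - c \right),
\]
which applies in particular to every FPM bundle and to every competitor $y$ appearing in condition~4. From it, $p' \cdot y \le 1 \iff p \cdot y \le 1$, so each agent's feasible set is literally unchanged; hence condition~3 ($p' \cdot x_i \le 1$) and the optimality half of condition~4 carry over verbatim from $(x,p)$. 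Likewise $p'\cdot y - p'\cdot x_i = \frac{1}{1-c}(p\cdot y - p\cdot x_i)$ with $\frac{1}{1-c} > 0$, so $p'\cdot y \ge p'\cdot x_i \iff p\cdot y \ge p\cdot x_i$, which gives the ``cheapest optimal bundle'' half of condition~4. Conditions~1 and~2 involve only $x$. Thus $(x,p')$ is an HZ equilibrium with $p'_{j^\star}=0$.

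Finally, the borderline case $c = 1$ is degenerate: then $p_j \ge 1$ for every $j$, so for each $i$ condition~3 forces $1 \ge p\cdot x_i \ge \sum_j x_{ij} = 1$, hence $p_j = 1$ whenever $x_{ij} > 0$; since every good lies in some agent's support by condition~2, this gives $p = \mathbf{1}$, every agent's feasible set is the whole simplex, and one checks immediately that $(x, \mathbf{0})$ is an HZ equilibrium as well (the feasible set is still the whole simplex, so optimality of $x_i$ is unaffected and the cheapness condition is vacuous). I expect the main (and only genuinely non-obvious) point to be the choice of transformation: the tempting shift $p \mapsto p - c\mathbf{1}$ also keeps prices non-negative and kills a coordinate, but it \emph{enlarges} every budget set (a bundle with $1 < p\cdot y \le 1+c$ becomes affordable), which can destroy the optimality condition; dividing by $1-c$ is exactly the correction that restores the original budget sets.
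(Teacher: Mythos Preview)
Your proof is correct. The paper, however, states this lemma without proof --- it is simply recorded as a ``useful lemma'' following the Hylland--Zeckhauser existence result --- so there is no argument in the paper to compare against. Your affine rescaling $p'_j = (p_j - c)/(1-c)$ is the right move: on the unit simplex it preserves both the budget set (via $p'\cdot y \le 1 \iff p\cdot y \le 1$) and all pairwise cost comparisons, and your treatment of the degenerate case $c=1$ (forcing $p=\mathbf{1}$ and then passing to $p'=\mathbf{0}$) is clean. It is worth noting that the paper deploys an entirely analogous transformation, $q_j = (p_{\max} - p_j)/(p_{\max} - 1)$, in its proof that HZ equilibria and HZ earnings equilibria coincide; so your argument is very much in the paper's spirit even though this particular lemma is left unproved there.
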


Inspired by recent work on fair division and market equilibria with chores
\cite{BMSY17,CGMM22,CKMN24}, we define an analogous equilibrium notion in which agents need to
\emph{earn} money rather than \emph{spend} it.
This is a more natural equilibrium in the chores setting since we would want to compensate agents
for doing chores.

\begin{definition}
    An \emph{HZ earnings equilibrium} consists of an allocation $(x_{i j})_{i \in A, j \in G}$ and
    payments $(p_j)_{j \in G}$ (both non-negative) such that:
    \begin{enumerate}
        \item Each agent is fully matched, i.e.\ $\sum_{j \in G} x_{i j} = 1$ for all $i \in A$.
        \item Each good is fully matched, i.e.\ $\sum_{i \in A} x_{i j} = 1$ for all $j \in G$.
        \item No agent under-earns, i.e.\ $p \cdot x_i \geq 1$.
        \item Every agent gets a highest-earning, optimal bundle, i.e.\ for any $(y_j)_{j \in G}$ with $y
            \geq 0$, $\sum_{j \in G} y_j = 1$, and $p \cdot y \geq 1$, we have that $u_i \cdot y
            \leq u_i \cdot x_i$ and if $u_i \cdot y = u_i \cdot x_i$, then $p \cdot y \leq p \cdot
            x_i$.
    \end{enumerate}
\end{definition}

Note that in neither definition did we make the assumption that $u \geq 0$ or $u \leq 0$. As we will show in Section~\ref{sec:shifting}, utilities can be shifted between these settings without loss of generality. Therefore, such assumptions are neither necessary nor helpful.
We can show that HZ earnings equilibria satisfy the same desirable properties as traditional HZ equilibria.

\begin{theorem}
    Let $(x, p)$ be an HZ earnings equilibrium.
    Then $x$ is envy-free.
\end{theorem}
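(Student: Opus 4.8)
The plan is to argue exactly as one does for the classical HZ equilibrium, only with the roles of ``affordability'' and ``earning enough'' interchanged. Recall that in the goods case, envy-freeness follows because any other agent's bundle is affordable to me (it costs at most $1$), hence cannot be strictly better than my own cheapest-optimal bundle. Here the fourth condition says my bundle is a \emph{highest-earning} optimal bundle among all bundles that earn at least $1$, so it suffices to check that every other agent's equilibrium bundle lies in this feasible set for me.

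Concretely, I would fix two agents $i, i' \in A$ and set $y := x_{i'}$, aiming to show $u_i \cdot x_i \geq u_i \cdot y$. I need to verify that $y$ is an admissible comparison bundle in condition~4 for agent $i$: (a) $y \geq 0$, since the allocation in an HZ earnings equilibrium is non-negative; (b) $\sum_{j \in G} y_j = \sum_{j \in G} x_{i' j} = 1$, by condition~1 (``each agent is fully matched'') applied to agent $i'$; and (c) $p \cdot y = p \cdot x_{i'} \geq 1$, by condition~3 (``no agent under-earns'') applied to agent $i'$. With these three facts in hand, condition~4 for agent $i$ gives $u_i \cdot y \leq u_i \cdot x_i$, i.e.\ $i$ does not envy $i'$. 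Since $i$ and $i'$ were arbitrary, $x$ is envy-free.

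There is essentially no hard step here; the whole argument hinges on the observation that an equilibrium bundle, viewed as a candidate for a \emph{different} agent, automatically satisfies the earnings floor $p \cdot y \geq 1$ because that floor holds for every agent. If anything is a potential pitfall, it is getting the direction of the earnings inequality right (it is ``$\geq 1$'' for earnings, dual to ``$\leq 1$'' for spending in the original HZ definition) and remembering that conditions~1 and~3 are universally quantified over all agents, so in particular they bind at $i'$. I would also note in passing that the tie-breaking clause of condition~4 and condition~3 for agent $i$ itself are not even needed for envy-freeness; only the inequality $u_i \cdot y \le u_i \cdot x_i$ part of condition~4 is used.
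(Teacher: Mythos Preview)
Your proof is correct and follows exactly the same approach as the paper: show that $x_{i'}$ is a feasible bundle for agent $i$ (non-negative, sums to $1$, earns at least $1$) and invoke condition~4 to conclude $u_i \cdot x_i \geq u_i \cdot x_{i'}$. If anything, you spell out the verification of feasibility more carefully than the paper does.
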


\begin{proof}
    Consider agents $i$ and $i'$.
    The bundle $x_{i'}$ is a feasible bundle for agent $i$ that would provide them with 1 unit of currency. Since $i$ is receiving a utility-maximizing bundle, it follows that $u_i \cdot x_i \geq u_i \cdot x_{i'}$, meaning agent $i$ does not envy $i'$. 
    
    Since $i$ and $i'$ were chosen arbitrarily, this implies that the allocation $x$ is envy-free. 
\end{proof}

\begin{theorem}
    Let $(x, p)$ be an HZ earnings equilibrium.
    Then $x$ is Pareto-optimal.
\end{theorem}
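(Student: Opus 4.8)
The plan is to mimic the standard first-welfare-theorem argument, adapted from spending constraints to earning constraints. Suppose for contradiction that some FPM $y$ Pareto-dominates the equilibrium allocation $x$, so $u_i \cdot y_i \geq u_i \cdot x_i$ for all $i \in A$ with strict inequality for at least one agent $i_0$. I would first use condition~4 of the HZ earnings equilibrium to control the earnings of each $y_i$ relative to $x_i$: if agent $i$ could earn at least $1$ with $y_i$, then optimality of $x_i$ forces $u_i \cdot y_i \leq u_i \cdot x_i$, hence equality, and then the tie-breaking clause forces $p \cdot y_i \leq p \cdot x_i$. So the key sub-step is to argue that every $y_i$ does in fact earn at least $1$, i.e.\ $p \cdot y_i \geq 1$; only then does condition~4 apply.

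To see $p \cdot y_i \geq 1$ for every $i$, I would argue by contradiction at the level of a single agent: if $p \cdot y_i < 1$ for some $i$, I want to perturb $y_i$ slightly to increase its earnings up to exactly $1$ while keeping it a valid unit bundle and not decreasing $i$'s utility below $u_i \cdot x_i$. Here is where Lemma~\ref{lem:hz_price_zero} is useful: we may assume without loss of generality that some good $j_0$ has $p_{j_0} = 0$. Then I can shift mass within $y_i$ from the zero-priced good $j_0$ onto a maximally-priced good to raise $p \cdot y_i$; but this could lower $i$'s utility. The cleaner route is to instead shift mass at the level of the whole allocation $y$: since all $y_i$ together with all $x_i$ are doubly-stochastic and $\sum_i p \cdot y_i = \sum_j p_j = \sum_i p \cdot x_i$ (because every good is fully matched in both $x$ and $y$), if some agent earns strictly less than $1$ under $y$ then some agent earns strictly more than $1$ under $y$; combined with the per-agent inequalities above this should be pushed to a contradiction.

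So the skeleton is: (i) using condition~3 and the fact that $\sum_i p \cdot x_i = \sum_j p_j$ (each good fully matched), deduce $\sum_j p_j \geq n$, and likewise for $y$, $\sum_i p\cdot y_i = \sum_j p_j \geq n$; (ii) for each $i$ with $p\cdot y_i \geq 1$, condition~4 gives $u_i \cdot y_i \leq u_i\cdot x_i$, so combined with Pareto-domination $u_i\cdot y_i = u_i\cdot x_i$ and then $p\cdot y_i \leq p\cdot x_i$; (iii) for the dominating agent $i_0$, Pareto-domination is strict, so condition~4 \emph{forces} $p \cdot y_{i_0} < 1$ (otherwise $u_{i_0}\cdot y_{i_0}\leq u_{i_0}\cdot x_{i_0}$, contradiction); (iv) now sum $p\cdot y_i$ over all $i$: the agents with $p\cdot y_i \geq 1$ contribute at most $p\cdot x_i$, the agent $i_0$ contributes strictly less than $1 \leq p\cdot x_{i_0}$, and any remaining agents with $1 > p \cdot y_i$ also satisfy $p\cdot y_i < 1 \leq p \cdot x_i$ by condition~3, so in every case $p\cdot y_i \leq p\cdot x_i$ with the inequality strict for $i_0$; hence $\sum_i p\cdot y_i < \sum_i p\cdot x_i = \sum_j p_j$, contradicting $\sum_i p\cdot y_i = \sum_j p_j$.

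The main obstacle I anticipate is handling the agents for whom $p \cdot y_i < 1$, since condition~4 simply does not apply to them: for these agents I cannot directly conclude $u_i \cdot y_i \leq u_i \cdot x_i$ from optimality. The resolution above sidesteps this by noting that I do not actually need the utility inequality for those agents — I only need the \emph{earnings} inequality $p\cdot y_i \leq p\cdot x_i$, which follows for \emph{free} from condition~3 ($p \cdot x_i \geq 1 > p\cdot y_i$) precisely because they are the under-earning ones. So the accounting closes without needing a perturbation argument at all, and Lemma~\ref{lem:hz_price_zero} turns out to be unnecessary; if a subtlety arises (e.g.\ an agent with $p\cdot y_i = p\cdot x_i = 1$ and $u_i\cdot y_i = u_i\cdot x_i$ contributing zero slack) the strict slack from $i_0$ still carries the contradiction. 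I would double-check the edge case where $p\cdot y_{i_0}=1$ exactly, but condition~4 applied with equality of earnings still forbids strictly larger utility, so that case cannot occur either.
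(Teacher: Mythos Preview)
Your proposal is correct and follows essentially the same route as the paper: compare the total earnings $\sum_i p\cdot y_i$ and $\sum_i p\cdot x_i$, observe both equal $\sum_j p_j$ since $x$ and $y$ are FPMs, and derive a strict inequality from the Pareto-improving agent to reach a contradiction. Your explicit case split on whether $p\cdot y_i \geq 1$ is in fact more careful than the paper's own proof, which simply asserts $p\cdot y_i \leq p\cdot x_i$ from the highest-earning clause without separating out the under-earning case; your observation that condition~3 handles that case for free is exactly the missing detail.
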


\begin{proof}
    Assume otherwise and let $y$ be a Pareto-better FPM.
    Consider now how much the agents would be earning under $y$ with the prices $p$.
    For each $i \in A$ we have $u_i \cdot y_i \geq u_i \cdot x_i$ since $y$ is Pareto-better than $x$.
    Thus, $p \cdot y_i \leq p \cdot x_i$ as $x_i$ is a \emph{highest-earning}, optimal bundle.

    Moreover, there is at least one $i \in A$ with $u_i \cdot y_i > u_i \cdot x_i$.
    But then we must have $p \cdot y_i < p \cdot x_i$.
    Overall, this shows that \[\sum_{i \in A} p \cdot y_i < \sum_{i \in A} p \cdot x_i .\]

    On the other hand, a simple calculation shows
    \begin{equation}\notag
    \begin{aligned}
    \sum_{i \in A} p \cdot y_i & = \sum_{i \in A} \sum_{j \in G} p_j y_{i j}\\
    & = \sum_{j \in G}
        p_j \sum_{i \in A} y_{i j}\\
        &= \sum_{j \in G} p_j
    \end{aligned}
    \end{equation}
    using only the fact that $y$ is a fractional perfect matching.
    The same calculation holds for $x$ and so we actually have that $\sum_{i \in A} p \cdot y_i =
    \sum_{i \in A} p \cdot x_i$, which is a contradiction. 
\end{proof}

\section{Shifting Utilities}\label{sec:shifting}

Our first observation is that due to the fact that we are considering only fractional perfect
matchings, most interesting properties will be preserved by a simple shifting operation.

\begin{definition}
    Let $(u_{i j})_{i \in A, j \in G}$ and $(c_i)_{i \in A}$ be rational numbers.
    Then we define $(u + c)_{i j} \coloneqq u_{i j} + c_i$.
\end{definition}

\begin{lemma}
    Let $x$ be an FPM which is envy-free wrt.\ utilities $u$.
    Then $x$ is also envy-free wrt.\ utilities $u + c$.
\end{lemma}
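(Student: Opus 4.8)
The plan is to unwind the definition of envy-freeness and observe that the shift $c_i$ contributes a constant to every term that matters, so it cancels. Concretely, fix agents $i, i' \in A$ and consider the quantity $(u+c)_i \cdot x_{i'}$. Expanding, $(u+c)_i \cdot x_{i'} = \sum_{j \in G} (u_{ij} + c_i) x_{i' j} = u_i \cdot x_{i'} + c_i \sum_{j \in G} x_{i' j}$. The key fact is that $x$ is a fractional perfect matching, so $\sum_{j \in G} x_{i' j} = 1$; hence $(u+c)_i \cdot x_{i'} = u_i \cdot x_{i'} + c_i$. The same computation with $i'$ replaced by $i$ gives $(u+c)_i \cdot x_i = u_i \cdot x_i + c_i$.

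From here the conclusion is immediate: since $x$ is envy-free wrt.\ $u$, we have $u_i \cdot x_i \geq u_i \cdot x_{i'}$ for all $i, i'$, and adding $c_i$ to both sides preserves the inequality, giving $(u+c)_i \cdot x_i \geq (u+c)_i \cdot x_{i'}$. As $i$ and $i'$ were arbitrary, $x$ is envy-free wrt.\ $u + c$.

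There is no real obstacle here — the only thing one needs to be careful about is that the shift $c_i$ depends on the \emph{envying} agent $i$, not on the item or on $i'$, so when comparing $x_i$ against $x_{i'}$ from agent $i$'s perspective the \emph{same} constant $c_i$ is added to both sides. This is exactly why the matching constraint (each bundle summing to $1$) is what makes the argument work: it guarantees that a per-agent additive shift in utilities translates into a per-agent additive shift in the valuation of \emph{every} unit-size bundle. I would also remark in passing that the argument is symmetric, so envy-freeness wrt.\ $u + c$ likewise implies envy-freeness wrt.\ $u = (u+c) + (-c)$, i.e.\ the two are equivalent; this foreshadows the later shifting results between the goods, chores, and mixed settings.
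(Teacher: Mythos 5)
Your proof is correct and follows the same route as the paper: both expand $(u+c)_i \cdot x_i = u_i \cdot x_i + c_i$ and $(u+c)_i \cdot x_{i'} = u_i \cdot x_{i'} + c_i$ using the fact that each bundle sums to $1$, then transfer the envy-freeness inequality. Your version just spells out the expansion and the dependence of $c_i$ on the envying agent more explicitly.
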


\begin{proof}
    Consider agents $i$ and $i'$.
    Then
    \begin{equation}\notag
    \begin{aligned}
        (u + c)_i \cdot x_i & = u_i \cdot x_i + c_i \\
        & \geq u_i \cdot x_{i'} + c_i \\
        & = (u + c)_i \cdot x_{i'}.
    \end{aligned}
    \end{equation}
    The only thing we used here is the fact that both $x_i$ and $x_{i'}$ sum up to 1. 
\end{proof}

\begin{lemma}
    Let $x$ be an FPM which is Pareto-optimal wrt.\ utilities $u$.
    Then $x$ is also Pareto-optimal wrt.\ utilities $u + c$.
\end{lemma}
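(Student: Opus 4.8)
The plan is to mirror the proof of the preceding envy-freeness lemma and argue by contraposition: I will show that any FPM that Pareto-dominates $x$ with respect to $u+c$ also Pareto-dominates $x$ with respect to $u$, so Pareto-optimality wrt.\ $u$ forces Pareto-optimality wrt.\ $u+c$.

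First I would suppose, for contradiction, that $x$ is not Pareto-optimal wrt.\ $u+c$, and fix an FPM $y$ that is Pareto-better than $x$ wrt.\ $u+c$. By definition, for every agent $i$ we have $(u+c)_i \cdot y_i \ge (u+c)_i \cdot x_i$, and for at least one agent $i^\star$ the inequality is strict.

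Next I would expand the shifted utilities using the fact that each row of an FPM sums to $1$: for every $i$,
\[
(u+c)_i \cdot y_i = u_i \cdot y_i + c_i \sum_{j \in G} y_{ij} = u_i \cdot y_i + c_i,
\]
and identically $(u+c)_i \cdot x_i = u_i \cdot x_i + c_i$. Cancelling the common term $c_i$ from both sides of each of the inequalities above yields $u_i \cdot y_i \ge u_i \cdot x_i$ for all $i$, with strict inequality at $i^\star$. Hence $y$ is Pareto-better than $x$ wrt.\ $u$, contradicting the hypothesis that $x$ is Pareto-optimal wrt.\ $u$.

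As with the envy-freeness lemma, the only property of $x$ and $y$ actually used is that they are fractional perfect matchings, i.e.\ that each agent's allocation row sums to $1$, so I do not anticipate any genuine obstacle. One could equally note that the shift operation $u \mapsto u+c$ is inverted by the shift $u \mapsto u - c$, so the implication is in fact an equivalence.
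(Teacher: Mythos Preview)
Your proof is correct and follows essentially the same argument as the paper: assume a Pareto-dominating $y$ under $u+c$, use that each row of an FPM sums to $1$ to show $(u+c)_i\cdot y_i - (u+c)_i\cdot x_i = u_i\cdot y_i - u_i\cdot x_i$, and derive a contradiction to Pareto-optimality under $u$. The paper presents the cancellation as a single difference identity rather than expanding both sides separately, but this is purely cosmetic.
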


\begin{proof}
    Assume otherwise, i.e.\ let there be an FPM $y$ which is Pareto-better than $x$ wrt.\ utilities
    $u + c$.
    Now observe that
    \[
        u_i \cdot y_i - u_i \cdot x_i = (u + c)_i \cdot y_i - (u + c)_i \cdot x_i
    \]
    again using the fact that $y_i$ and $x_i$ both sum up to 1.
    But from this we can immediately conclude that $y$ is also Pareto-better than $x$ wrt.\ u which
    is a contradiction to the Pareto-optimality of $x$. 
\end{proof}

\begin{lemma}\label{lem:hz_shift}
    Let $(x, p)$ be an HZ equilibrium wrt.\ utilities $u$.
    Then $(x, p)$ is also an HZ equilibrium wrt.\ utilities $u + c$.
\end{lemma}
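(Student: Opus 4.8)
The plan is to go through the four defining conditions of an HZ equilibrium one by one and observe that the shift $u \mapsto u + c$ touches only the last of them, which is then dispatched by the same cancellation used in the two preceding shifting lemmas.

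First I would note that conditions 1--3 in the definition of an HZ equilibrium --- each agent fully matched, each good fully matched, and no agent overspending ($p \cdot x_i \le 1$) --- make no reference to the utilities whatsoever; they are properties of $x$ and $p$ alone. So if $(x, p)$ satisfies them with respect to $u$, it satisfies them unchanged with respect to $u + c$.

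The only condition mentioning utilities is condition 4. Fix an agent $i$ and an arbitrary bundle $y$ with $y \ge 0$, $\sum_{j \in G} y_j = 1$, and $p \cdot y \le 1$. The key observation is the identity
\[
    (u + c)_i \cdot y - (u + c)_i \cdot x_i = u_i \cdot y - u_i \cdot x_i ,
\]
valid because $y$ and $x_i$ both sum to $1$, so the $c_i$ terms cancel. Combined with the fact that $(x, p)$ is an HZ equilibrium wrt.\ $u$, this gives $(u + c)_i \cdot y \le (u + c)_i \cdot x_i$; and if equality holds on the left, the identity forces $u_i \cdot y = u_i \cdot x_i$, hence $p \cdot y \ge p \cdot x_i$, again by the equilibrium property wrt.\ $u$. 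Since the ``cheapest bundle'' tie-break is a statement about prices only, it transfers verbatim. This establishes condition 4 wrt.\ $u + c$, completing the proof.

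There is no real obstacle here; the lemma is pure bookkeeping, and the one point worth stating explicitly is precisely the displayed cancellation identity, which is the common engine behind all three shifting lemmas.
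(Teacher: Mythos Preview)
Your proof is correct and follows essentially the same approach as the paper: both note that conditions 1--3 are utility-independent, and both dispatch condition 4 via the cancellation of $c_i$ that comes from bundles summing to $1$. Your version is in fact slightly more explicit than the paper's, which handles the optimality inequality directly and then defers the cheapest-bundle tie-break to ``a similar argument,'' whereas you spell out that case as well.
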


\begin{proof}
    Clearly $x$ satisfies the first three requirements of being an HZ equilibrium under utilities $u
    + c$ since those are independent of the utilities in the first place.

    Consider some agent $i \in A$ and an alternative bundle $(y_j)_{j \in G}$ with $y \geq 0$,
    $\sum_{j \in G} y_j = 1$, and $p \cdot y \leq 1$.
    Then
    \begin{equation}\notag
    \begin{aligned}
    (u + c)_i \cdot y &= u_i \cdot y + c_i\\
    & \leq u_i \cdot x_i + c_i\\
    & = (u + c)_i \cdot x_i
    \end{aligned}
    \end{equation}
    using that both $y$ and $x_i$ sum up to 1.
    Thus $x_i$ is an optimal bundle for $i$.
    A similar argument shows that $x_i$ is also cheapest. 
\end{proof}

This justifies the previous comment that we need not restrict HZ equilibria to the all goods
setting.
For example, in order to show existence of HZ equilibria for arbitrary utilities, simply shift
everything into the non-negative, get an HZ equilibrium, and then use Lemma~\ref{lem:hz_shift} to
show that said equilibrium is also an HZ equilibrium under the original, mixed utilities.
The same observation holds for HZ earnings equilibria.

\begin{lemma}
    Let $(x, p)$ be an HZ earnings equilibrium wrt.\ utilities $u$.
    Then $(x, p)$ is also an HZ earnings equilibrium wrt.\ utilities $u + c$.
\end{lemma}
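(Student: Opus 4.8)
The plan is to mimic the proof of Lemma~\ref{lem:hz_shift} almost verbatim, since the only difference is that the spending constraint $p \cdot y \leq 1$ is replaced by the earnings constraint $p \cdot y \geq 1$, and ``cheapest'' is replaced by ``highest-earning''. First I would observe that conditions 1--3 in the definition of an HZ earnings equilibrium make no reference to the utilities at all, so $(x, p)$ trivially still satisfies them under $u + c$.

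The substance is condition 4. Fix an agent $i \in A$ and an alternative bundle $(y_j)_{j \in G}$ with $y \geq 0$, $\sum_{j \in G} y_j = 1$, and $p \cdot y \geq 1$. Because $y$ and $x_i$ both sum to $1$, I get
\begin{equation}\notag
\begin{aligned}
(u + c)_i \cdot y &= u_i \cdot y + c_i\\
&\leq u_i \cdot x_i + c_i\\
&= (u + c)_i \cdot x_i,
\end{aligned}
\end{equation}
where the inequality uses that $x_i$ is a utility-maximal feasible bundle for $i$ under $u$. This shows $x_i$ is still utility-maximal under $u + c$. For the tie-breaking clause, suppose $(u+c)_i \cdot y = (u+c)_i \cdot x_i$; subtracting $c_i$ from both sides gives $u_i \cdot y = u_i \cdot x_i$, so by the earnings-equilibrium property under $u$ we conclude $p \cdot y \leq p \cdot x_i$, which is exactly what condition 4 demands under $u + c$. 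Hence $(x, p)$ is an HZ earnings equilibrium with respect to $u + c$.

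There is essentially no obstacle here: the argument is the direct analogue of the spending version, and the key (and only) fact being exploited is that shifting each agent's utility by a constant $c_i$ changes the value of \emph{every} unit-mass bundle by exactly $c_i$, so it preserves all comparisons between feasible bundles for a fixed agent and leaves the price/earning quantities untouched. If anything needs care, it is merely being explicit that both the ``$\leq$'' in the optimality clause and the tie-breaking ``$\leq$'' in the earnings clause survive, but both follow immediately from the cancellation of $c_i$.
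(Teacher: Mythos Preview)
Your proof is correct and is exactly the approach the paper takes: the paper simply writes ``Same as proof of Lemma~\ref{lem:hz_shift},'' and you have spelled out that analogue in full, including the tie-breaking clause.
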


\begin{proof}
    Same as proof of Lemma~\ref{lem:hz_shift}. 
\end{proof}

Finally, we remark that HZ envy-freeness, Pareto-optimality, and HZ equilibria are also preserved
under scaling by any positive constant.
And so any transformation of the kind $u \mapsto a u + c$ with $a > 0$ preserves these notions.

\section{Equivalence of Equilibria}\label{sec:eq_equiv}

We will now show that HZ equilibria and HZ earnings equilibria are in fact the same thing.

\begin{theorem}\label{thm:equilibria_equiv}
    Let $x$ be some FPM.
    If there exist prices $(p_j)_{j \in G}$ making $(x, p)$ and HZ equilibrium, then there also
    exist earnings $(q_j)_{j \in G}$ making $(x, q)$ an HZ earnings equilibrium and vice versa.
\end{theorem}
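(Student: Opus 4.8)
The core idea is to exhibit an explicit transformation between spending prices and earning prices that leaves the allocation $x$ fixed. Since the shifting lemmas of Section~\ref{sec:shifting} let us assume whatever sign pattern on the utilities is convenient, the natural first move is to normalize. For the forward direction (HZ $\Rightarrow$ HZ earnings), I would use Lemma~\ref{lem:hz_price_zero} to assume that $p_{j_0}=0$ for some good $j_0$, and then shift utilities so that every agent's optimal-bundle utility is normalized in a useful way; the key structural fact to extract is that in an HZ equilibrium every agent spends \emph{exactly} $1$ unless their whole bundle is free — more precisely, an agent with $p\cdot x_i < 1$ must be getting a bundle of maximum possible utility among all distributions (not just affordable ones), because they could add any good at zero marginal... wait, that's not quite it. The cleaner approach: observe $\sum_j p_j = \sum_i p\cdot x_i \le n$, and try the candidate earning prices $q_j \coloneqq \alpha - p_j$ for a suitable constant $\alpha$ (so that the budget inequality flips direction). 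One checks that $q \ge 0$ requires $\alpha \ge \max_j p_j$, and the per-agent constraint $q\cdot x_i \ge 1$ becomes $\alpha - p\cdot x_i \ge 1$, i.e.\ $p\cdot x_i \le \alpha - 1$; since $p\cdot x_i \le 1$ in an HZ equilibrium, taking $\alpha = 1 + \max_j p_j$ (or simply $\alpha=2$ after rescaling prices so $p_j \le 1$, which is WLOG since we can scale $p$ down and it stays an HZ equilibrium as long as... hmm, scaling $p$ changes the budget set) works. The affine map $y \mapsto q\cdot y = \alpha - p\cdot y$ on the simplex then shows: $p\cdot y \le 1 \iff q\cdot y \ge \alpha - 1$, so the affordable set for the spending market and the "earn-enough" set for the earnings market coincide exactly when $\alpha - 1 = 1$, and condition 4 of one definition translates verbatim into condition 4 of the other (the cheapest/highest-earning tie-breaking clauses are mirror images under $p \leftrightarrow q$).

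So the concrete plan is: (1) Given an HZ equilibrium $(x,p)$, rescale $p$ by a positive constant $\lambda$ so that $\lambda p_j \le 1$ for all $j$ — but first I must check this rescaling preserves the HZ equilibrium conditions; it does \emph{not} in general, since condition 3 is $\lambda p \cdot x_i \le 1$. The fix is to instead work with the budget directly: note that conditions 3--4 only depend on $p$ through the \emph{set} $\{y \in \Delta : p\cdot y \le 1\}$ together with the linear functional $p$ for tie-breaking. Rather than rescale, I would just set $q_j \coloneqq M - p_j$ where $M \coloneqq 1 + \max_{j\in G} p_j$, giving $q \ge 0$ automatically, $q\cdot y = M - p\cdot y$ for any $y\in\Delta$, and hence $q\cdot x_i = M - p\cdot x_i \ge M - 1 = \max_j p_j \ge 1$ provided some price is at least $1$; if all prices are $< 1$ then I should instead take $M = 2$, which still gives $q\cdot x_i = 2 - p\cdot x_i \ge 1$. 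In all cases choose $M \coloneqq 1 + \max(1, \max_j p_j)$, so $M \ge 2$ and $M - 1 \ge 1$. (2) Verify conditions 1--2 are untouched. (3) Verify condition 3: $q\cdot x_i = M - p\cdot x_i \ge M - 1 \ge 1$. (4) Verify condition 4: for $y \in \Delta$ with $q\cdot y \ge 1$, i.e.\ $p\cdot y \le M - 1$; this is weaker than $p \cdot y \le 1$ when $M > 2$, so I cannot directly invoke the HZ condition — \textbf{this is the main obstacle}. I need the budget thresholds to match exactly.

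To resolve the obstacle cleanly, I would \emph{first} apply Lemma~\ref{lem:hz_price_zero} and then additionally rescale so that $\max_j p_j = 1$ exactly: concretely, if $p \ne 0$, replace $(x,p)$ by $(x, p/\max_j p_j)$ — and here I must confront that this is not automatically an HZ equilibrium. The rescaled prices still have a zero coordinate, but the budgets change. However, there is a standard fact (and I would prove it as a sub-lemma, or cite that it follows from the cheapest-bundle condition): in any HZ equilibrium, after the zero-price normalization one may rescale prices to lie in $[0,1]$ while preserving the equilibrium, because any agent with leftover budget is spending it on the zero-priced good and any agent spending exactly $1$ continues to do so under... no. The genuinely clean route, which I will take, is: rescale prices so $\max_j p_j = 1$; then note $p\cdot x_i \le 1$ still (budgets only shrank relative to nothing — actually this needs the agent to still afford $x_i$, which holds iff $p\cdot x_i \le 1$, true since old $p\cdot x_i \le 1$ and we divided by a number $\ge 1$ only if old $\max_j p_j \ge 1$; if old $\max < 1$ we'd be dividing by something $<1$ and budgets could break). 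Given the delicacy, the robust plan is: handle it via the affine bijection on $\Delta$ with $M = 1 + \max_j p_j$ but \emph{also rescale the target budget}, observing that HZ and HZ-earnings equilibria are each invariant under simultaneously scaling all prices and the budget threshold by the same positive constant — i.e.\ the "$1$" in conditions 3--4 is just a normalization. Then $q \coloneqq M\mathbf{1} - p$ with budget threshold $M-1$ for the earnings market is equivalent to the standard threshold $1$ after dividing $q$ by $M-1$. Under this correspondence the affordable set $\{p\cdot y \le 1\}$ and the earn-set $\{q\cdot y \ge M-1\}$ coincide as subsets of $\Delta$, and the tie-breaking clauses map to each other since minimizing $p\cdot y$ over an optimal face is the same as maximizing $q\cdot y = M - p\cdot y$. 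The reverse direction (HZ earnings $\Rightarrow$ HZ) is entirely symmetric: given $(x,q)$, set $p \coloneqq N\mathbf{1} - q$ for $N$ large enough that $p \ge 0$ and the thresholds align, and run the same verification. I would write the forward direction in full and remark that the converse is symmetric.
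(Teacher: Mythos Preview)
Your final plan is correct and is essentially the paper's proof: the paper simply writes down $q_j \coloneqq (p_{\max} - p_j)/(p_{\max} - 1)$, which is exactly your affine-then-rescale map $q = (M\mathbf{1} - p)/(M-1)$ with the particular choice $M = p_{\max}$ rather than your $M = 1 + p_{\max}$, and then verifies conditions 1--4 via the identity $q\cdot y = (M - p\cdot y)/(M-1)$ for $y$ in the simplex. Your detours through Lemma~\ref{lem:hz_price_zero}, utility shifting, and price rescaling are unnecessary --- neither you nor the paper needs them once the affine map is in hand (and both proofs tacitly assume the relevant denominator is positive).
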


\begin{proof}
    Let $(x, p)$ be an HZ equilibrium.
    For each $j \in G$, we define
    \[
        q_j \coloneqq \frac{p_\mathrm{max} - p_j}{p_\mathrm{max} - 1}
    \]
    where $p_\mathrm{max}$ is the maximum price among all $p_j$.

    First, observe that $q$ is indeed non-negative since $p_j \leq p_\mathrm{max}$.
    Moreover, no agent under-earns since
    \begin{equation}\notag
    \begin{aligned}
        q \cdot x_i &= \frac{p_\mathrm{max} - p \cdot x_i}{p_\mathrm{max} - 1}\\
        & \geq  \frac{p_\mathrm{max} - 1}{p_\mathrm{max} - 1}\\
        & = 1
    \end{aligned}
    \end{equation}
    where we used that $x_i$ sums up to 1 and that $p \cdot x_i \leq 1$.

    In fact, what this calculation shows is that any bundle which costs at most 1 wrt.\ p, earns at
    least 1 wrt.\ q.
    So in fact, this shows that $x$ is an optimum bundle wrt.\ earnings $q$ as it is an optimum
    bundle wrt.\ prices $p$.
    Lastly, observe that if $(y_j)_{j \in G}$ and $(z_j)_{j \in G}$ are non-negative and each sum to
    1, then $p \cdot y \leq p \cdot z$ if and only if $q \cdot y \geq q \cdot z$.
    So $x$ is a highest-earning optimum bundle under $q$ since it was a cheapest optimum bundle
    under $p$.
    
    We have just shown that $(x, q)$ is an HZ earnings equilibrium.
    The proof in the other direction is nearly identical.
    Given an HZ earnings equilibrium $(x, q)$, we define
    \[
        p_j \coloneqq \frac{q_\mathrm{max} - q_j}{q_\mathrm{max} - 1}
    \]
    for all $j \in G$ and simply check that $(x, p)$ is an HZ equilibrium. 
\end{proof}

As a corollary we of course get that HZ earnings equilibria always exist.

\begin{theorem}
    An HZ earnings equilibrium always exists.
\end{theorem}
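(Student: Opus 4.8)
The plan is to obtain this statement as an immediate corollary of the equivalence result Theorem~\ref{thm:equilibria_equiv}, combined with the (known) existence of HZ equilibria. I would proceed in two short steps.

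First, establish that an HZ equilibrium exists for the given utility matrix $u$, which in our setting is allowed to be an arbitrary mixed manna. The cited Hylland--Zeckhauser theorem gives this directly in the all-goods case. For general $u$, choose shifts $c_i$ large enough that $(u+c)_{ij} \ge 0$ for all $i,j$; invoke the Hylland--Zeckhauser theorem to obtain an HZ equilibrium $(x,p)$ with respect to $u+c$; and then apply Lemma~\ref{lem:hz_shift} to conclude that the same pair $(x,p)$ is an HZ equilibrium with respect to the original utilities $u$. Second, apply Theorem~\ref{thm:equilibria_equiv} to the FPM $x$: since there exist prices $p$ making $(x,p)$ an HZ equilibrium, the theorem guarantees earnings $q$ making $(x,q)$ an HZ earnings equilibrium. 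This already completes the argument.

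I do not expect any genuine obstacle here, since all the work has been done in the earlier lemmas and in Theorem~\ref{thm:equilibria_equiv}. The only point worth a moment's attention is that the existence theorem being cited is stated for goods, so the shifting lemma is actually needed to cover the chores and mixed settings; everything else is a two-line chaining of previously established facts. (One could instead prove existence from scratch by setting up a Kakutani fixed-point argument directly for the earnings formulation, but routing through Theorem~\ref{thm:equilibria_equiv} is considerably cleaner and avoids duplicating the topological machinery.)
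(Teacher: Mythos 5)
Your proposal matches the paper's argument exactly: the paper presents this theorem as an immediate corollary of Theorem~\ref{thm:equilibria_equiv} together with the existence of HZ equilibria, with the shifting lemma handling arbitrary (mixed/chores) utilities just as you describe. No gaps; this is the intended proof.
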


\section{Bivalued Utilities}\label{sec:bivalued}
An interesting special case in the all-goods setting is that in which $u_{i j} \in \{0, 1\}$ for all $i, j$. Such utilities are called \emph{dichotomous} and are of particular interest. Furthermore, they represent the only special case (with unbounded agents / goods) in which HZ equilibria are known to be
polynomial time computable \cite{VY21}.
Moreover, due to a result of Bogomolnaia and Moulin \cite{BM04}, the HZ mechanism is also incentive compatible in the dichotomous setting. 

We can easily extend these results to more generalized bivalued utilities, including chores. Such utilities have been extensively studied (see, e.g.,~\cite{BM04,VY21,BabaioffEF21,GargMQ22,EbadianPS22}) due to their practical relevance. In real-world scenarios, it may be easier for agents to express their preferences by simply categorizing their desire for an item as ``high'' or ``low'' using numerical values, rather than providing a detailed utility function. 

\begin{theorem}
    If the utilities are of the form $u_{i j} \in \{a_i, b_i\}$ with $(a_i)_{i \in A}$ and $(b_i)_{i
    \in B}$ rational, we can compute HZ equilibria in polynomial time. Moreover, the resulting
    mechanism is strategyproof.
\end{theorem}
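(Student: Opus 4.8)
The plan is to reduce to the dichotomous case, which is handled by \cite{VY21,BM04}. Each agent supplies a pair of values; after relabelling I will assume $a_i \le b_i$ for every $i$. An agent with $a_i = b_i$ is indifferent among all bundles, so I would shift their utilities to $0$, give them an arbitrary share, and note that they can never gain from a misreport. For an agent with $a_i < b_i$ I would apply the per-agent affine map $u_{ij} \mapsto (u_{ij} - a_i)/(b_i - a_i)$, which sends $\{a_i, b_i\}$ to $\{0,1\}$ and, the values being rational, creates no bit-complexity blow-up. By Lemma~\ref{lem:hz_shift} together with the remark that HZ equilibria are also preserved under positive scaling --- and since none of the four defining conditions couples the utilities of two distinct agents, the scaling may be done independently for each agent --- a pair $(x, p)$ is an HZ equilibrium of the original instance if and only if it is one of the transformed $\{0,1\}$-instance. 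The computability half of the theorem follows immediately: transform, run the polynomial-time algorithm of \cite{VY21} on the dichotomous instance, and return its output, which is then an HZ equilibrium of the given bivalued instance.

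For strategyproofness I would first observe that the transformed instance --- hence the allocation the mechanism outputs --- depends only on the sets $S_i \coloneqq \{ j \in G : u_{ij} = b_i \}$ of ``high'' items, not on the reported cardinal values. For a fixed agent $i$ with true values $a_i < b_i$ and true set $S_i$, any FPM $x$ yields $u_i \cdot x_i = a_i + (b_i - a_i) \sum_{j \in S_i} x_{ij}$, which is strictly increasing in the total high-fraction $\sum_{j \in S_i} x_{ij}$. Consequently, from $i$'s perspective a report amounts --- up to the irrelevant choice of cardinal values (reporting $a_i' > b_i'$ merely relabels the high set as its complement) --- to a choice of which items to declare high, and $i$'s payoff is monotone in the high-fraction received, which is exactly the quantity an agent maximizes in the dichotomous model. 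Incentive compatibility of HZ on dichotomous utilities \cite{BM04} then says that declaring the true $S_i$ maximizes $\sum_{j \in S_i} x_{ij}$, so $i$ has no profitable deviation; agents with $a_i = b_i$ trivially have none.

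The step I expect to require the most care is the selection of a single equilibrium, since an instance can admit many HZ allocations and strategyproofness is sensitive to the tie-breaking rule. I would handle it by appealing to the fact that, for dichotomous utilities, all HZ equilibria induce the same utility profile (each agent receives their fair share of desired items): this makes \cite{VY21}'s deterministic output canonical in utility space and makes \cite{BM04}'s incentive guarantee a statement about precisely that profile, and since the per-agent affine maps above are monotone in each agent's utility, both the polynomial-time computability and the strategyproofness transfer without change to the bivalued setting. I would also verify that the cited results apply to $\{0,1\}$-instances in which an agent's desired set is empty or all of $G$ (such as the shifted indifferent agents) --- they do, these being unrestricted dichotomous instances --- and note that the reduction automatically covers the chores and mixed cases, since it uses only $a_i < b_i$ and never the signs of $a_i, b_i$.
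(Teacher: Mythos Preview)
Your proposal is correct and follows the same route as the paper's proof: shift and rescale each agent's utilities to $\{0,1\}$ via a per-agent affine map, then invoke \cite{VY21} for polynomial-time computability and \cite{BM04} for strategyproofness. You are in fact more careful than the paper about the edge case $a_i = b_i$, the agent-by-agent scaling, and why the strategyproofness guarantee transfers (only the high-item set matters and the true utility is monotone in the high-fraction received), but the underlying argument is identical.
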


\begin{proof}
    As per Section~\ref{sec:shifting}, we can shift the agents' utilities until they are of the form
    $u_{i j} \in \{0, c_i\}$ for some non-negative vector $c$.
    Moreover, we can rescale the utilities so that in fact $u_{i j} \in \{0, 1\}$.
    So we can run the algorithm by Vazirani and Yanakakis \cite{VY21} to obtain an HZ equilibrium
    and the resulting mechanism is strategy-proof by the result of Bogomolnaia and Moulin
    \cite{BM04}. 
\end{proof}

Due to Theorem~\ref{thm:equilibria_equiv}, we can also compute HZ earnings equilibria in polynomial time for bivalued instances.

\section{Constantly Many Agent Types}\label{sec:const_types}

Another interesting special case is the setting in which there are constantly many types of agents,
or alternatively in which there are constantly many agents with different demands.
Here, we are given a small set $A$ of agents and a set $G$ of goods or chores.
Each agent $i \in A$ has some demand $d_i$ such that $\sum_{i \in A} d_i = |G| = n$.
The goal is to fractionally assign $G$ to $A$ such that every agent $i$ gets exactly $d_i$ units of
goods and chores.

\begin{definition}
    In the setting with demands, an FPM $x$ is considered \emph{envy-free}, if for every agent $i$
    and $i'$ we have
    \[
        \frac{u_i \cdot x_i}{d_i} \geq \frac{u_i \cdot x_{i'}}{d_{i'}}.
    \]
    In other words, agents experience envy on the basis of utility per demand.
\end{definition}

Note that this setting includes the setting in which there are many agents but which only have $k$ different utility vectors.
For each unique utility vector $(v_j)_{j \in G}$, we create a representative agent $i$ with demand $d_i = l$ where $l$ is the number of agents with $u_i = v$.
Given an EF+PO allocation in this contracted instance, we can create an EF+PO allocation in the original instance by distributing equal shares of each agent $i$'s bundle to the $d_i$ many agents that it represents.
The definition of envy-freeness given above is chosen so that envy-freeness is maintained by this reduction.

The setting with constantly many agents can be approached in two different ways.
First, Alaei et al.\ \cite{AJT17} provides an algorithm that computes
HZ equilibria with a constant number of agents. By our observation that HZ equilibria can be
shifted, this implies that one can also find HZ equilibria in the presence of chores when there are
constantly many agents.
The downside of this approach is that it is very slow as it relies on algebraic cell decomposition;
for $k$ agents, their algorithm checks on the order of $k^{5 k^2}$ many cells and this check is in
itself not trivial either.
Additionally, HZ equilibria are not generally rational so the output of the algorithm is a
description of the solution in terms of roots of polynomials.
Garg et al.\ \cite{GargTV22} give an alternative algorithm to find $\epsilon$-approximate HZ equilibria by solving $O(n k^k \epsilon^{-2 k})$ linear programs.
This approach, while in principle more efficient than Alaei et al.\ \cite{AJT17}, is also intractible in practice.

On the other hand, one may consider EF+PO allocations.
We are not aware of any algorithms for computing EF+PO allocations with constantly many agent types
which are more efficient than finding an HZ equilibrium via algebraic cell decomposition.
If the entire instance has constant size, i.e.\ both the types of agents \emph{and} of goods are
constant, then a simple polyhedral approach can be used.
In that case, the set of envy-free allocations forms a polytope in constant dimension and as shown
by Tr\"obst and Vazirani \cite{TV24}, there is always a vertex of this EF polytope which is EF+PO.
By enumerating the vertices of the EF polytope, it is hence possible to efficiently find EF+PO allocations.

Lastly, let us consider an even more special case, namely the case in which there are only two types of agents, where each agent's utility function is one of the two given functions. This scenario has been extensively studied across various related contexts (see, e.g.,~\cite{AzizLRS23,GargMQ24}) In this setting, it is possible to find an EF+PO allocation in polynomial time via linear programming. 
Our approach relies on the following lemma.

\begin{lemma}\label{lem:2_agents}
    Assume there are only two types of agents with different demands, i.e.\ $|A| = 2$, and let $x$ be an envy-free FPM.
    Moreover, let $y$ be another FPM which is Pareto-better than $x$.
    Then $y$ is also envy-free.
\end{lemma}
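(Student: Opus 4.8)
The plan is to exploit the fact that with only two agents the envy-freeness condition collapses to a single scalar inequality per agent, which then transfers monotonically under Pareto improvement. Write $d_1, d_2 > 0$ for the two demands and $n = d_1 + d_2$. Since $x$ is a fractional perfect matching, for every good $j$ we have $x_{1 j} + x_{2 j} = 1$, i.e.\ $x_2 = \mathbf{1} - x_1$ where $\mathbf{1}$ denotes the all-ones vector over $G$; likewise $y_2 = \mathbf{1} - y_1$. Setting $U_i \coloneqq \sum_{j \in G} u_{i j} = u_i \cdot \mathbf{1}$, I would first rewrite the two envy constraints of $x$. Agent $1$ not envying agent $2$ reads $\frac{u_1 \cdot x_1}{d_1} \geq \frac{u_1 \cdot x_2}{d_2} = \frac{U_1 - u_1 \cdot x_1}{d_2}$, which after clearing the (positive) denominators is equivalent to $u_1 \cdot x_1 \geq \frac{d_1}{n} U_1$. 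Symmetrically, agent $2$ not envying agent $1$ becomes $u_2 \cdot x_1 \leq \frac{d_1}{n} U_2$. So envy-freeness of $x$ is \emph{exactly} the pair of inequalities $u_1 \cdot x_1 \geq \frac{d_1}{n} U_1$ and $u_2 \cdot x_1 \leq \frac{d_1}{n} U_2$, and the identical characterization (with $y_1$ in place of $x_1$) describes envy-freeness of $y$.

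It then remains to transfer these two inequalities from $x$ to $y$ using Pareto-domination. Since $y$ is Pareto-better than $x$, in particular $u_1 \cdot y_1 \geq u_1 \cdot x_1 \geq \frac{d_1}{n} U_1$, so the first inequality holds for $y$. For the second, Pareto-domination also gives $u_2 \cdot y_2 \geq u_2 \cdot x_2$; substituting $y_2 = \mathbf{1} - y_1$ and $x_2 = \mathbf{1} - x_1$ gives $U_2 - u_2 \cdot y_1 \geq U_2 - u_2 \cdot x_1$, and cancelling $U_2$ yields $u_2 \cdot y_1 \leq u_2 \cdot x_1 \leq \frac{d_1}{n} U_2$. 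Hence both envy inequalities hold for $y$, so $y$ is envy-free, as claimed.

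I do not anticipate a genuine obstacle here: the only things to watch are the bookkeeping with the demands and the directions of the inequalities (agent $1$'s constraint is a lower bound on $u_1 \cdot x_1$, agent $2$'s is an upper bound on $u_2 \cdot x_1$), plus the minor observation that \emph{weak} Pareto-domination already suffices — the strict improvement for some agent is never used. It is also worth remarking why two agents is essential: for three or more agents the allocation received by a third agent is not pinned down by $x_1$, so envy-freeness no longer reduces to monotone scalar conditions and the statement is false in general.
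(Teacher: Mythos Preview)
Your proof is correct and rests on the same key observation as the paper: with two agents, $x_2 = \mathbf{1} - x_1$ (and likewise for $y$), so a Pareto improvement for each agent forces the other agent's bundle to look no better from the first agent's viewpoint. The only difference is cosmetic: you first rewrite envy-freeness as the proportionality thresholds $u_1 \cdot x_1 \geq \tfrac{d_1}{n} U_1$ and $u_2 \cdot x_1 \leq \tfrac{d_1}{n} U_2$ and then transfer those monotonically, whereas the paper chains the envy inequalities directly via $\tfrac{u_1 \cdot y_1}{d_1} \geq \tfrac{u_1 \cdot x_1}{d_1} \geq \tfrac{u_1 \cdot x_2}{d_2} \geq \tfrac{u_1 \cdot y_2}{d_2}$; the underlying algebra is identical.
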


\begin{proof}
    Note that for any good $j \in G$, we have $x_{1 j} + x_{2 j} = 1$ since $x$ is a FPM and
    likewise we have $y_{1 j} + y_{2 j} = 1$.
    In particular, this implies that $y_{1 j} - x_{1 j} = x_{2 j} - y_{2 j}$.
    Hence we have that
    \begin{align*}
        u_1 \cdot y_2 &= u_1 \cdot x_2 + u_1 \cdot (y_2 - x_2) \\
        &= u_1 \cdot x_2 + u_1 \cdot (y_1 - x_1) \\
        &\geq u_1 \cdot x_2
    \end{align*}
    using the fact that $u_1 \cdot y_1 \geq u_1 \cdot x_1$ since $y$ is Pareto-better than $x$.
    In other words, agent 1 thinks that agent 2's bundle is worse in $y$ than in $x$.

    Now it is easy so see that agent 1 cannot envy agent 2:
    \begin{align*}
        \frac{u_1 \cdot y_1}{d_1} &\geq \frac{u_1 \cdot x_1}{d_1} \\
        &\geq \frac{u_1 \cdot x_2}{d_2} \\
        &\geq \frac{u_1 \cdot y_2}{d_2}.
    \end{align*}
    By symmetry, agent 2 also cannot envy agent 1. 
\end{proof}

Consider now the following LP.
\begin{maxi*}
    {}
    {\sum_{i \in A} u_i \cdot x_i}
    {}
    {}
    \addConstraint{u_i \cdot x_i}{\geq u_i \cdot x_{i'}}{\quad \forall i \in A, i' \in A \setminus
    \{i\}}
    \addConstraint{\sum_{j \in G} x_{i j}}{= d_i}{\quad \forall i \in A}
    \addConstraint{\sum_{i \in A} x_{i j}}{= 1}{\quad \forall j \in G}
    \addConstraint{x_{i j}}{\geq 0}{\quad \forall i \in a, j \in G.}
\end{maxi*}
Its optimal solution $x$ is Pareto-optimal \emph{among the envy-free solutions}.
Note that in general, $x$ need not be Pareto-optimal \emph{among all solutions} which is what we
generally refer to as PO.
However, by Lemma~\ref{lem:2_agents}, $x$ is indeed PO in the setting with two types of agents and so this
gives a simple way of finding EF+PO solutions in polynomial time.

Finally, let us briefly mention the setting in which there are constantly many items instead of agents.
This is a much simpler scenario as one can enumerate the price vectors on the goods via a simple grid search in order to find an approximate HZ equilibrium in polynomial time.
Moreover, the special case of two types of goods simply reduces to the bivalued setting discussed in Section~\ref{sec:bivalued}.

\section{Nash Bargaining}\label{sec:nash_bargaining}

Aside from instances with constantly many agent or good types \cite{AJT17}, the bivalued case is in
fact the only case in which a polynomial time algorithm is known for the goods version of HZ.
Indeed, finding an approximate HZ equilibrium is PPAD-hard even with only four utility values
\cite{CCPY22}.

Tr\"obst and Vazirani \cite{TV24} recently showed that in fact finding any allocation which is
envy-free and Pareto-optimal is already PPAD hard.
However, they also show that the alternative, Nash-bargaining-based mechanism proposed by Hosseini
and Vazirani \cite{HV22} is 2-approximately envy-free and 2-approximately incentive compatible.
The idea behind this mechanism is to maximize Nash welfare, i.e.\ the product of agents' utilities.
An Nash bargaining allocation is therefore given by the solution to:

\begin{maxi*}
    {}
    {\prod_{i \in A} u_i \cdot x_i}
    {}
    {}
    \addConstraint{\sum_{j \in G} x_{i j}}{= 1}{\quad \forall i \in A}
    \addConstraint{\sum_{i \in A} x_{i j}}{= 1}{\quad \forall j \in G}
    \addConstraint{x_{i j}}{\geq 0}{\quad \forall i \in A, j \in G.}
\end{maxi*}

For non-negative utilities, the objective function is strictly non-decreasing in the utility of all
the agents and therefore the Nash bargaining solution is Pareto-optimal.
Moreover, the objective function is log-concave and therefore we can find Nash bargaining solutions
very efficiently in both theory and practice \cite{PTV21}.

The attractive game theoretic properties together with the polynomial time computability make Nash
bargaining a promising alternative to HZ for the all goods setting.
We therefore pose the question: is there an analogous mechanism for the chores or even mixed settings?
Obviously, the mechanism as stated does not produce meaningful results when some or all of the
utilities can be negative.

Note that unlike for exact EF+PO or HZ allocations, shifting utilities does not make sense when the
goal is to find \emph{approximately} EF+PO allocations.
To see why, consider an instance where all $u_{ij} \in [0, 1]$.
If we simply shift the instance by say $c = 100$, then \emph{any} solution is $1.01$-EF.
In particular, we do not get meaningful guarantees if we shift utilities from negative to positive
in order to apply existing approximate mechanisms.

There are however two natural ways in which the Nash-bargaining-based mechanism can be generalized
to the chores setting.
The first is to minimize the product of agents' \emph{disutilities}, i.e. the negation of their
utilities.
The allocation would therefore be given by
\begin{mini*}
    {}
    {\prod_{i \in A} (-u_i \cdot x_i)}
    {}
    {}
    \addConstraint{\sum_{j \in G} x_{i j}}{= 1}{\quad \forall i \in A}
    \addConstraint{\sum_{i \in A} x_{i j}}{= 1}{\quad \forall j \in G}
    \addConstraint{x_{i j}}{\geq 0}{\quad \forall i \in A, j \in G.}
\end{mini*}

The resulting allocation $x$ is clearly Pareto-optimal.
Unfortunately, there can be large amounts of envy between agents.
For an example, see Figure~\ref{fig:nash1}.
Here we have two agents $i$ and $i'$ who both agree that chore $j'$ is worse than chore $j$.

Let $t$ be the amount that $i$  is matched to $j'$, i.e. $t = x_{i j'}$.
Then, the objective under Nash bargaining would be to minimize $(C t + 1 - t) (1 - t)$ which is
achieved when $t = 1$.
The issue is that this allocation creates a large amount of envy.
From the perspective of agent $i$, the bundle assigned to agent $i$ has disutility $C$ whereas the
bundle assigned to agent $i'$ has disutility 1.
Therefore, agent $i$ has factor $C$ envy towards agent $i'$.

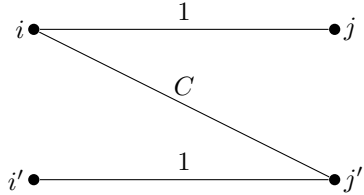
\begin{figure}[htb]
    \centering
    \begin{tikzpicture}
        \node[fill, circle, inner sep=1.5pt] (v2) at (0, 0) {};
        \node[fill, circle, inner sep=1.5pt] (v1) at (0, 2) {};
        \node[fill, circle, inner sep=1.5pt] (w2) at (4, 0) {};
        \node[fill, circle, inner sep=1.5pt] (w1) at (4, 2) {};

        \draw[-] (v1) -- (w1) node[midway,above] {$1$};
        \draw[-] (v1) -- (w2) node[midway,above] {$C$};
        \draw[-] (v2) -- (w2) node[midway,above] {$1$};

        \node[left] at (v1) {$i$};
        \node[left] at (v2) {$i'$};
        \node[right] at (w1) {$j$};
        \node[right] at (w2) {$j'$};
    \end{tikzpicture}
    \caption{This example shows that minimizing the product of disutilities does not lead to fair
    outcomes. Here, agent $i$ has utilities $(1, C)$ and agent $i'$ has utilities $(0, 1)$. Both
    agents agree that $j'$ is the worse chore. However, the Nash bargaining solution will assign
    $j'$ fully to $i$. \label{fig:nash1}}
\end{figure}

Minimizing the product of disutilities, while Pareto-optimal, does evidently not lead to fair outcomes.
For an alternative generalization, we look towards the work of Bogomolnaia et al.\ \cite{BMSY17} who
studied the traditional, divisible fair division problem \emph{without} the matching constraint.
In this setting, it is well-known that Nash bargaining solutions actually coincide with Fisher
market equilibria which (for equal budgets) are always envy-free.

What Bogomolnaia et al.\ showed is that in the chores setting, the correct generalization of Nash
bargaining which lines up with competitive equilibria and hence produces envy-free allocations, is
to \emph{maximize} the product of the agents' disutilities over the set of Pareto-optimal
allocations.
We call this \emph{Pareto-constrained Nash bargaining}.
In our model which includes the matching constraint, this would mean solving the program:
\begin{maxi*}
    {}
    {\prod_{i \in A} (-u_i \cdot x_i)}
    {}
    {}
    \addConstraint{x \text{ is Pareto-optimal}}
    \addConstraint{\sum_{j \in G} x_{i j}}{= 1}{\quad \forall i \in A}
    \addConstraint{\sum_{i \in A} x_{i j}}{= 1}{\quad \forall j \in G}
    \addConstraint{x_{i j}}{\geq 0}{\quad \forall i \in A, j \in G.}
\end{maxi*}

Note that it is not clear if this can even be solved in polynomial time, though recent work provides
some indication that this may be possible. \cite{CGMM22,CKMN24}
Unfortunately, it turns out that for matchings, this too does not yield fair outcomes.
This is demonstrated in Figure~\ref{fig:nash2}.

Once again, let $t = x_{i j'}$ be the amount of $j'$ that $i$ receives.
Note that any fractional perfect matching is Pareto-optimal in this instance.
The product of the agents' disutilities is $(2t + 1 - t) (1 - t) = 1 - t^2$ which is clearly
maximized when $t = 0$.
But that means that, from the perspective of agent $i'$, $i'$ has disutility 1 whereas $i$ has
disutility 0.
Again, this causes unbounded envy from $i'$ towards $i$.

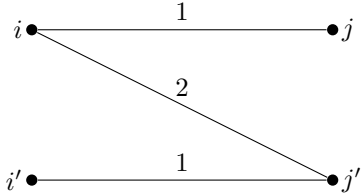
\begin{figure}[htb]
    \centering
    \begin{tikzpicture}
        \node[fill, circle, inner sep=1.5pt] (v2) at (0, 0) {};
        \node[fill, circle, inner sep=1.5pt] (v1) at (0, 2) {};
        \node[fill, circle, inner sep=1.5pt] (w2) at (4, 0) {};
        \node[fill, circle, inner sep=1.5pt] (w1) at (4, 2) {};

        \draw[-] (v1) -- (w1) node[midway,above] {$1$};
        \draw[-] (v1) -- (w2) node[midway,above] {$2$};
        \draw[-] (v2) -- (w2) node[midway,above] {$1$};

        \node[left] at (v1) {$i$};
        \node[left] at (v2) {$i'$};
        \node[right] at (w1) {$j$};
        \node[right] at (w2) {$j'$};
    \end{tikzpicture}
    \caption{This example shows that maximizing the product of disutilities over the set of
    Pareto-optimal allocations does not lead to fair outcomes. Agent $i$ has utilities $(1, 2)$
    whereas agent $i'$ has utilities $(0, 1)$. The Pareto-constrained Nash bargaining solution gives
    $j$ to $i$ and $j'$ to $i'$.\label{fig:nash2}}
\end{figure}

As we have demonstrated, neither Nash bargaining nor Pareto-constrained Nash bargaining allow us to achieve constant factor approximations of envy-freeness in the context of chores.

\section{Discussion}

In this paper, we initiated the study of matching markets involving chores. We showed that, due to the matching constraints, many results from the goods setting can be extended to mixed or chores settings via utility shifting. However, this approach does not apply to the problem of finding approximately envy-free and Pareto-optimal allocations. This leads us to what we believe is an exciting open question for cardinal-utility matching markets with chores:

\begin{question}
    Is there a polynomial time algorithm that finds approximately fair and efficient allocations in a cardinal-utility matching market with chores?
\end{question}

We remark that for the more general mixed setting, it is not entirely obvious what an appropriate notion of approximate fairness is.


\bibliographystyle{plain}
\bibliography{references}

\end{document}